\def\BibTeX{{\rm B\kern-.05em{\sc i\kern-.025em b}\kern-.08em
    T\kern-.1667em\lower.7ex\hbox{E}\kern-.125emX}}
\pgfplotsset{compat=newest}
\newcommand{\indep}{\perp \!\!\! \perp}
\newcommand{\brackets}[1]{\left(#1\right)}
\newcommand{\sbrackets}[1]{\left[#1\right]}
\newtheorem{theorem}{Theorem}
\newtheorem{lemma}{Lemma}
\newtheorem{proposition}{Proposition}
\newtheorem{definition}{Definition}
\newtheorem{remark}{Remark}
\newtheorem*{assumption}{Assumption A}
\acrodef{mmse}[MMSE]{{minimum mean-squared error}}
\acrodef{dm}[DM]{{decision maker}}
\acrodef{lqg}[LQG]{Linear-Quadratic-Gaussian}
\acrodef{pdf}[p.d.f.]{probability density function}
\acrodef{dpc}[DPC]{Dirty Paper Coding}
\acrodef{dmc}[DMC]{discrete memoryless channel}
\newcommand{\abs}[1]{\left\lvert#1\right\rvert}
\begin{document}

\title{Empirical Coordination over Markov Channel \\with Independent Source
\thanks{This work is supported by Swedish Research Council (VR) under grant 2020-03884. The work of M. Le Treust is supported by the French National Agency for Research (ANR) via the project n°ANR-22-PEFT-0010 of the France 2030 program PEPR Futur Networks.}
}

\author{%
  \IEEEauthorblockN{Mengyuan Zhao\IEEEauthorrefmark{1}, Maël Le Treust\IEEEauthorrefmark{2}, Tobias J. Oechtering\IEEEauthorrefmark{1}}
  \IEEEauthorblockA{\IEEEauthorrefmark{1}KTH Royal Institute of Technology, 100 44
Stockholm, Sweden, Email: \{mzhao, oech\}@kth.se}

   \IEEEauthorblockA{\IEEEauthorrefmark{2}CNRS, University of Rennes, Inria, IRISA UMR 6074, F-35000 Rennes, France, Email: mael.le-treust@cnrs.fr}
}

\maketitle

\begin{abstract}
We study joint source–channel coding over Markov channels through the empirical coordination framework. More specifically, we aim at determining the empirical distributions of source and channel symbols that can be induced by a coding scheme. We consider strictly causal encoders that generate channel inputs, without access to the past channel states, henceforth driving the Markov state evolution. Our main result is the single-letter inner and outer bounds of the set of achievable joint distributions, coordinating all the symbols in the network. To establish the inner bound, we introduce a new notion of typicality, the input-driven Markov typicality, and develop its fundamental properties. Contrary to the classical block-Markov coding schemes that rely on the blockwise independence for discrete memoryless channels, our analysis directly exploits the Markov channel structure and improves beyond the independence-based arguments. 
\end{abstract}

\begin{IEEEkeywords}
coordination coding, joint source-channel coding, Markov channel, input-driven Markov typicality.
\end{IEEEkeywords}

\section{Introduction}


Markov chains, as canonical models of stochastic recursion, are often represented as being driven by exogenous independent randomness \cite{diaconis1999iterated}. This perspective casts the evolution as a state update rule acted on by fresh external randomness at each step, unifying formulations across random dynamical systems and stochastic algorithms. When combined with structural conditions such as monotonicity or contractivity, it leads to tractable analyses of stationarity and convergence \cite{hermer2023rates,gupta2024probabilistic}. Moreover, from a control perspective, such exogenous inputs correspond to the classical open-loop operation \cite{aastrom2012introduction,bertsekas2012dynamic,tanaka2012dynamic}.



In information theory, Markov channels, sometimes known as finite state channels (FSC), have been studied extensively through the lens of channel capacity and control mechanisms \cite{blackwell1958proof,gallager1968information,gray1987ergodicity,verdu1994general,goldsmith2002capacity}. When feedback is available, the encoder can adapt its inputs based on past channel outputs, leading to a rich body of results on FSCs with feedback \cite{massey1990causality, permuter2009finite,tatikonda2008capacity,grigorescu2024capacity}. The seminal work \cite{chen2005capacity} characterized the capacity via directed information, with the structural assumption of the capacity-achieving distribution later established in \cite{kourtellaris2017information}. Specific cases, such as the binary symmetric Markov channel was studied in \cite{permuter2014capacity}, and further generalized in \cite{stavrou2017sequential}.
These capacity characterizations, however, are generally intricate and involve multi-letter expressions. Recently, under the unichain and ergodicity assumptions, a single-letter expression was obtained in \cite{wu2025actions}.

In this work, we study the Markov evolution structure from a joint source–channel coding (JSCC) perspective, as in \cite{shannon1948mathematical,gunduz2024joint,kourtellaris2015nonanticipative}. 
Specifically,  we consider that the source symbol is generated independently at each time, and the channel input is generated by strictly causal encoders based solely on the past source sequence, and does \textit{not} adapt to the latent channel state history, i.e., no channel feedback. The channel input, together with the past channel state, updates the current channel state. In the end, the decoder operates noncausally, observing the entire channel output sequence to produce its action sequence.

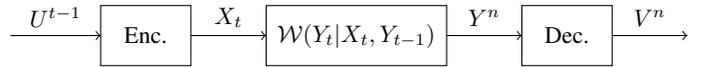
\begin{figure}[t]
  \centering


\centering
\begin{tikzpicture}[scale=0.8,every node/.style={scale=0.9}]

    \draw (0,0) rectangle (1.5,1);
    \draw (2.75,0) rectangle (5.75,1);

    \draw (7,0) rectangle (8.5,1);

    \draw[->] (-1.5,0.5) -- (0,0.5);
    \draw[->] (1.5,0.5) -- (2.75,0.5);
    \draw[->] (5.75,0.5) -- (7,0.5);
    \draw[->] (8.5,0.5) -- (9.75,0.5);

    \node at (0.75,0.5) {Enc.};

    \node at (2.125,0.8) {$X_t$};

    \node at (4.25,0.5) {$\mathcal{W}(Y_{t}|X_t,Y_{t-1})$};
    
    \node at (6.375,0.8) {$Y^n$};
   
    \node at (7.75,0.5) {Dec.};
    \node at (9.125,0.8) {$V^n$};

    \node at (-0.75,0.8) {$U^{t-1}$};
    
\end{tikzpicture}


\caption{Joint source-channel coding with strictly causal encoder and noncausal decoder over Markov channel}
\vspace{-0.2cm}
\label{fig: JSCC Markov channel}
\end{figure} 

Empirical coordination coding framework, introduced in \cite{cuff2011coordination,raginsky2012empirical,cuff2010coordination}, characterizes cooperative behavior among agents by requiring the empirical distribution of symbol sequences to converge to a prescribed target distribution. In this framework, encoders and decoders select their actions based on local observations not only to ensure reliable communication, but also to induce a desired joint behavior. Coordination results for JSCC with (strictly) causal decision-makers over \ac{dmc} were established in \cite{cuff2011hybrid, Letreust2015empirical}. Extensions to point-to-point settings with channel state or feedback were developed in \cite{Treust2017joint,le2015empirical}, and recently have been further applied to distributed decision-making \cite{Treust2024power,zhao2024coordination,zhao2024causal,zhao2025zero}.

To the best of our knowledge, this work is the first to study JSCC over Markov channels through the lens of coordination coding. Under our assumption of the Markov channel model, we derive single-letter inner and outer bounds that characterize the set of achievable joint distributions that can be arbitrarily well approximated by suitable coding schemes. To establish the inner bound, we introduce a new notion of typicality, the input-driven Markov typicality, and develop its fundamental properties, such as the asymptotic equipartition property (AEP) and packing lemma. In contrast to the standard random-coding arguments for \ac{dmc} that rely on block independence in the block-Markov coding constructions, such as in \cite{le2015empirical, choudhuri2013causal}, this notion of typicality enables us to exploit the Markov channel structure directly, allowing the analysis to go beyond the independence-based arguments while still recovering the same results. The outer bound shares the same information constraint expression as the inner bound, but is defined over a generally larger set of distributions.




This paper is structured as follows: Our problem is formulated in Section \ref{sec: sys model}. Section \ref{sec: main results} presents our main results, which comprise the inner and outer bounds for the achievable distribution region. In Section \ref{app: input-driven typicality}, we introduce the input-driven Markov typicality and then prove the main results in Section \ref{app: achievability} and \ref{app: converse}.

\vspace{0.4cm}

\section{System Model}\label{sec: sys model}
We study the problem depicted in Fig. \ref{fig: JSCC Markov channel}. Capital letters, like $U$, denote random variables, calligraphic fonts, like $\mathcal{U}$, denote alphabets, and lowercase letters, like $u\in\mathcal{U}$, denote the realizations of random variables. We assume all the alphabets considered here $\mathcal{U},\mathcal{X},\mathcal{Y},\mathcal{V}$ are finite. The $\ell_1$ distance between two probability distributions $\mathcal{Q}$ and $\mathcal{P}$ is denoted by $\lVert\mathcal{Q}- \mathcal{P}\lVert_1 = \sum_{u}\abs{\mathcal{Q}(u) - \mathcal{P}(u)}$. The probability is denoted by $\mathbb P(\cdot)$. 

We assume the source is drawn i.i.d. $\sim\mathcal{P}_U$, and the channel is stationary Markov with one step memory of the latent state, with conditional distribution $\mathcal{W}_{Y|X,Y'}$ . Moreover, the statistics of the source and channel are known by both agents.

\begin{definition}
    A code with strictly causal encoder and noncausal decoder is a tuple of (deterministic) functions $c = (\{ f^{(t)}_{X_t|U^{t-1}}\}_{t=1}^n, g_{V^n|Y^n})$ defined by

    \begin{equation}
        f^{(t)}_{X_t|U^{t-1}}: \mathcal{U}^{t-1} \longrightarrow \mathcal{X},\quad g_{V^n|Y^n}: \mathcal{Y}^n\longrightarrow \mathcal{V}^n,
    \end{equation}
    where $U_0\triangleq \varnothing$.
    We denote by $\mathcal{C}(n)$ the set of codes with strictly causal encoder and non-causal decoder.
\end{definition}
We assume the starting channel state $Y_0$ is arbitrary. Then, a code $c\in\mathcal{C}(n)$ induces a joint distribution on sequences $(U^n,X^n,Y^n,V^n)$ given by
\begin{align}
    \mathcal{P}_{U}^{\otimes n}\prod_{t=1}^n\sbrackets{ f^{(t)}_{X_t|U^{t-1}}\cdot\mathcal{W}_{Y_{t}|X_t,Y_{t-1}}}g_{V^n|Y^n},\label{eq: joint sequential distribution}
\end{align}
from which we have the following:
\begin{align}
    &U_t\indep (X_t,Y_t),\label{eq: U_t indep X_t}\\
    &X_t-\!\!\!\!\minuso\!\!\!\!-U^{t-1}-\!\!\!\!\minuso\!\!\!\!- Y^{t-1},\label{eq: open-loop markov chain}\\
    &Y_t -\!\!\!\!\minuso\!\!\!\!- (X_t,Y_{t-1}) -\!\!\!\!\minuso\!\!\!\!- (U^{t-1},X^{t-1},Y^{t-2})\label{eq: markov chain, markov channel},
\end{align}
where \eqref{eq: U_t indep X_t} comes from strictly causal encoding and the independent source, \eqref{eq: open-loop markov chain} is the strictly causal encoder with no channel feedback, and \eqref{eq: markov chain, markov channel} is the Markov channel.

 We denote the empirical distribution $Q^n\in\Delta(\mathcal{U}\times\mathcal{X}\times\mathcal{Y}\times\mathcal{Y}\times\mathcal{V})$ of sequences $(u^n,x^n,y^n,v^n)$ with starting state $y_0$
 by
\begin{align}
        &Q^n(u,x,y',y,v)\label{eq: JSCC empirical distribution}\\
        &= \frac{1}{n}\sum_{t=1}^n\mathbf{1}\{u_t=u,x_t=x,y_{t-1}=y',y_t=y,v_t=v\},\nonumber\\
        &\quad\forall (u,x,y',y,v)\in\mathcal{U}\times\mathcal{X}\times\mathcal{Y}\times\mathcal{Y}\times\mathcal{V}.\nonumber
    \end{align}
Note here, we take into account the adjacent pairs $(y_{t-1},y_t)$ in the Markov sequence $y^n$.

\begin{definition}\label{def: achievable region}
    We define a joint distribution $\mathcal{P}_{U,X,Y',Y,V}\in\Delta (\mathcal{U}\times\mathcal{X}\times\mathcal{Y}\times\mathcal{Y}\times\mathcal{V})$ to be achievable, if for all $\varepsilon\geq 0$, there exists $N_0\in\mathbb N$, such that for all $n\geq N_0$,  we can find a sequence of codes $c\in\mathcal{C}(n)$, such that the induced joint empirical distribution given by \eqref{eq: JSCC empirical distribution} satisfies
    \begin{align}
        \mathbb P(||Q^n - \mathcal{P}_{U,X,Y',Y,V}||_{1}\geq\varepsilon)\leq \varepsilon.
    \end{align}
\end{definition}
Furthermore, we impose the following assumption:
\begin{assumption}
    For every stationary channel input distribution $\mathcal{P}_X$ on $\mathcal{X}$,
the induced Markov chain $Y^n=\{Y_1,\ldots,Y_n\}$ admits a unique recurrent class
$\mathcal{R}_Y\subseteq\mathcal{Y}$ that is irreducible
and aperiodic.
\end{assumption}

The above assumption is standard, see \cite{norris1998markov,breuer2005introduction,wu2025actions}, \cite[Sec.~4.1.2]{huang2015linear}. This assumption guarantees that the long-run behavior of the Markov chain $Y^n$ does not depend on its starting state and is ergodic within $\mathcal{R}_Y$. Moreover, it ensures the following result, see also \cite{norris1998markov},
\begin{lemma}
    The Markov chain $Y^n$ admits a unique equilibrium distribution $\pi_Y \in \Delta(\mathcal{R}_Y)$,
    which is uniquely determined by the stationary input distribution $\mathcal{P}_X$ and the channel transition kernel $\mathcal{W}_{Y|X,Y'}$ by
\begin{align}
\pi_Y(y)
= \sum_{x,y'}\pi_{Y}(y')\mathcal{P}_X(x)\mathcal{W}_{Y|X,Y'}(y|x,y')\label{eq: target equilibrium}.
\end{align}
\end{lemma}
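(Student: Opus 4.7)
The plan is to reduce the joint process $(X_t, Y_t)$ to a standard finite-state Markov chain on $\mathcal{Y}$ alone, and then invoke classical Perron--Frobenius/ergodic theory for finite irreducible aperiodic chains. Concretely, I would first define the averaged transition kernel
\begin{align}
\tilde{\mathcal{W}}(y \mid y') \;\triangleq\; \sum_{x} \mathcal{P}_X(x)\, \mathcal{W}_{Y \mid X, Y'}(y \mid x, y'),
\end{align}
which is a well-defined stochastic kernel on $\mathcal{Y}$ since $\mathcal{P}_X$ is a probability distribution on $\mathcal{X}$. Because the input process is memoryless and stationary (the encoder is not used here—only the marginal input law matters for the state dynamics under Assumption A), the induced chain $Y^n$ has one-step transition matrix $\tilde{\mathcal{W}}$.

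Next, I would appeal to Assumption A, which guarantees that $\tilde{\mathcal{W}}$ possesses a unique recurrent class $\mathcal{R}_Y \subseteq \mathcal{Y}$ that is irreducible and aperiodic. The restriction $\tilde{\mathcal{W}}|_{\mathcal{R}_Y}$ is therefore a finite, irreducible, aperiodic stochastic matrix. By the standard finite-state Perron--Frobenius theorem (or equivalently the fundamental theorem of Markov chains, see e.g.\ \cite{norris1998markov}), such a matrix admits a unique stationary distribution $\pi_Y \in \Delta(\mathcal{R}_Y)$ satisfying
\begin{align}
\pi_Y(y) \;=\; \sum_{y' \in \mathcal{R}_Y} \pi_Y(y')\, \tilde{\mathcal{W}}(y \mid y'), \qquad y \in \mathcal{R}_Y,
\end{align}
and this distribution is the limit of the $n$-step marginals from any initial state, since any transient state eventually enters $\mathcal{R}_Y$ almost surely in finite expected time.

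Finally, substituting the definition of $\tilde{\mathcal{W}}$ into the stationarity equation yields
\begin{align}
\pi_Y(y) \;=\; \sum_{x,\,y'} \pi_Y(y')\, \mathcal{P}_X(x)\, \mathcal{W}_{Y \mid X, Y'}(y \mid x, y'),
\end{align}
which is precisely \eqref{eq: target equilibrium}. Extending $\pi_Y$ by zero on $\mathcal{Y} \setminus \mathcal{R}_Y$ gives the required element of $\Delta(\mathcal{R}_Y)$, and uniqueness on the recurrent class (combined with the fact that transient states carry no stationary mass) yields global uniqueness.

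The only real subtlety—and the one I would pay closest attention to—is the role played by the averaging over $X$. The chain $Y^n$ is not Markov with respect to the \emph{joint} channel kernel $\mathcal{W}_{Y \mid X, Y'}$ unless the input law $\mathcal{P}_X$ is genuinely stationary and memoryless, so I would make this hypothesis explicit and verify that Assumption A is stated for exactly this averaged kernel $\tilde{\mathcal{W}}$; otherwise irreducibility and aperiodicity would have to be established before invoking the ergodic theorem. Once that alignment is confirmed, the remainder is a direct application of well-known finite Markov chain results and requires no novel argument.
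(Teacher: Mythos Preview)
Your proposal is correct and matches the paper's treatment. The paper does not actually supply a standalone proof of this lemma; it states the result as a standard consequence of Assumption~A and cites \cite{norris1998markov,breuer2005introduction,wu2025actions} and \cite[Sec.~4.1.2]{huang2015linear}, while later in the appendix it defines exactly your averaged kernel (denoted $T_{Y|Y'}$ there) and uses it in the same way. Your write-up is more explicit than the paper's, but the route---average out $X$ to obtain a time-homogeneous kernel on $\mathcal{Y}$, invoke Assumption~A for irreducibility and aperiodicity on $\mathcal{R}_Y$, then apply the standard finite-chain ergodic theorem---is identical.
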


Next, we present our main result: the single-letter inner and outer bounds of the achievable joint distribution region, defined in Definition \ref{def: achievable region}.

\section{Main Results}\label{sec: main results}


Our main result provides necessary and sufficient conditions for a distribution that is achievable.

\begin{theorem}[Main Result]\label{thm: main}
    A joint probability distribution $\mathcal{P}_{U,X,Y',Y,V}$ is achievable, if there exists an auxiliary random variable $W$, such that it factorizes as
    \begin{align}
        \mathcal{P}_U\mathcal{P}_X\mathcal{P}_{W|U,X}\pi_{Y'}\mathcal{W}_{Y|X,Y'}\mathcal{P}_{V|Y,X,W}\label{eq: achievable target distr},
    \end{align}
    and satisfies the information constraint
    \begin{align}
        I(X;Y|Y') - I(U;W|X) \geq 0,\label{eq: target info constraint}
    \end{align}
    where $\pi_{Y'}$ is the unique equilibrium distribution \eqref{eq: target equilibrium}. Conversely, $\mathcal{P}_{U,X,Y',Y,V}$ is achievable only if there exists an auxiliary random variable $W$ satisfies \eqref{eq: target info constraint} and the joint distribution decomposes as
    \begin{align}
        \mathcal{P}_U\mathcal{P}_X\mathcal{P}_{Y'|X}\mathcal{W}_{Y|X,Y'}\mathcal{P}_{W|U,X,Y,Y'}\mathcal{P}_{V|Y,X,W}\label{eq: converse target distr}.
    \end{align}
\end{theorem}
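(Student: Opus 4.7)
The plan is to prove achievability and converse separately. For \emph{achievability}, I would combine a block-Markov coding scheme with the input-driven Markov typicality developed in the Appendices. Fix a target distribution satisfying \eqref{eq: achievable target distr} and a rate $R$ with $I(U;W|X)<R<I(X;Y|Y')$, which is feasible by \eqref{eq: target info constraint}. Partition the horizon into blocks of length $n$ and, in each block, independently draw $2^{nR}$ codeword pairs $(X^n(m),W^n(m))\sim\mathcal{P}_X\mathcal{P}_{W|X}$. By strict causality, the entire source $U^n(b-1)$ of the previous block is known when block $b$ starts, so the encoder can search for an index $m_b$ such that $(U^n(b-1),X^n(m_b),W^n(m_b))$ is jointly typical for the target $\mathcal{P}_U\mathcal{P}_X\mathcal{P}_{W|U,X}$; a covering lemma for the input-driven Markov typicality guarantees success when $R>I(U;X,W)=I(U;W|X)$ (using $U\indep X$). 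The encoder transmits $X^n(m_b)$, and the Markov channel produces $Y^n(b)$. The noncausal decoder recovers $\hat m_b$ by searching for the unique codeword input-driven Markov typical with the observed $Y^n(b)$ (given the block's initial state), which succeeds by a packing lemma when $R<I(X;Y|Y')$. It then samples $V_t\sim\mathcal{P}_{V|Y,X,W}(\cdot\mid Y_t,X_t(\hat m_b),W_t(\hat m_b))$. The AEP for the input-driven Markov typicality, together with the assumed ergodicity of $Y^n$ under stationary input distribution, pins the empirical marginal of $Y_{t-1}$ to the equilibrium $\pi_{Y'}$ and certifies that the joint empirical distribution concentrates around the target.

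For the \emph{converse}, I would use time-sharing. Let $T\sim\mathrm{Unif}\{1,\ldots,n\}$ and identify $(U,X,Y',Y,V)\triangleq(U_T,X_T,Y_{T-1},Y_T,V_T)$ together with the auxiliary $W\triangleq(Y^{T-1},Y_{T+1}^{n},U^{T-1},T)$. The factorization \eqref{eq: converse target distr} follows from \eqref{eq: U_t indep X_t}--\eqref{eq: markov chain, markov channel}: $U_T\indep X_T$ from the independent source and strictly causal encoder, $Y_T\mid X_T,Y_{T-1}$ reduces to the channel kernel $\mathcal{W}_{Y|X,Y'}$, and $V_T$ is a deterministic function of $(Y_T,W_T)$ because the noncausal decoder makes $V_t$ depend only on $Y^n=(Y^{T-1},Y_T,Y_{T+1}^{n})$. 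For the information constraint, the independence $U_t\indep(U^{t-1},Y^{t-1})$ combined with $X_t=f^{(t)}(U^{t-1})$ collapses the identification to $I(U_t;W_t\mid X_t)=I(U_t;Y_{t+1}^{n}\mid U^{t-1},Y^{t-1})$. Summing over $t$, the Markov channel property and the fact that $Y^n$ is a function of $(X^n,Y_0)$ and channel noise independent of $U^n$ yield
\begin{align}
\sum_{t=1}^{n}I(U_t;W_t\mid X_t) \leq I(U^n;Y^n\mid Y_0) \leq I(X^n;Y^n\mid Y_0) \leq \sum_{t=1}^{n}I(X_t;Y_t\mid Y_{t-1}),
\end{align}
where the last step uses conditioning-reduces-entropy together with $H(Y_t\mid X^n,Y^{t-1},Y_0)=H(Y_t\mid X_t,Y_{t-1})$. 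Normalizing by $n$, passing to the limit via continuity of mutual information in the joint distribution, and invoking a support lemma to bound the alphabet of $W$ gives \eqref{eq: target info constraint} for the target $\mathcal{P}_{U,X,Y',Y,V}$.

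The main obstacle is the achievability analysis, specifically the development of the input-driven Markov typicality and its accompanying covering, packing, and AEP lemmas. Standard block-Markov constructions over a \ac{dmc} exploit blockwise independence of channel inputs and outputs, which breaks down here because the Markov state couples adjacent blocks and the effective ``channel'' from $X^n$ to $Y^n$ within a block is itself a Markov chain indexed by the inputs. The new typicality must track adjacent state pairs $(Y_{t-1},Y_t)$ rather than single symbols, control the initial-state randomness through the ergodicity assumption, and deliver covering and packing bounds matching $I(U;W|X)$ and $I(X;Y|Y')$, respectively. Bridging the empirical marginal of $Y_{t-1}$ to $\pi_{Y'}$ and closing the gap between the random-coding error bounds and the single-letter rates is where the analysis must go beyond independence-based arguments.
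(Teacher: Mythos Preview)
Your converse uses the same auxiliary $W=(U^{T-1},Y^{T-1},Y_{T+1}^n,T)$ as the paper and follows the same single-letterization; the inequality chain you sketch is essentially the paper's manipulation starting from $0=I(U^n;Y^n)-I(U^n;Y^n)$ and expanding by the chain rule in two directions.

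The achievability, however, has a genuine structural gap in the codebook. With single-indexed pairs $(X^n(m),W^n(m))$, you pick $m_b$ at the start of block $b$ so that $W^n(m_b)$ is typical with $U^n(b{-}1)$, and then transmit $X^n(m_b)$ during block $b$. But the empirical distribution \eqref{eq: JSCC empirical distribution} couples symbols at the \emph{same} time instant, so in block $b$ the source that must be coordinated is $U^n(b)$, not $U^n(b{-}1)$. Since $m_b$ depends only on $U^n(1),\ldots,U^n(b{-}1)$, the pair $(X^n(m_b),W^n(m_b))$ is independent of $U^n(b)$; the induced block-$b$ empirical law therefore has $U\indep(X,W,Y',Y,V)$, which matches the target only in the degenerate case $I(U;W\mid X)=0$ and otherwise fails to produce the coupling $\mathcal{P}_{W|U,X}$ required by \eqref{eq: achievable target distr}. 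The paper repairs this with a doubly-indexed superposition codebook: for each $m$ it draws $2^{nR}$ sequences $W^n(m,\hat m)\sim\mathcal{P}_{W|X}^{\otimes n}$ conditional on $X^n(m)$; at block $b$ the encoder selects $m_b$ so that $\bigl(U^n_{b-1},X^n(m_{b-1}),W^n(m_{b-1},m_b)\bigr)$ is typical and sends $X^n(m_b)$. Then $W^n_{b}=W^n(m_{b},m_{b+1})$ is tied both to the input $X^n(m_b)$ actually sent in block $b$ and, via the choice of $m_{b+1}$ made in the next block, to $U^n_b$, so the quintuple $(U^n_b,X^n_b,W^n_b,Y^n_b,V^n_b)$ is aligned with \eqref{eq: achievable target distr} for every $b\in[1{:}B{-}1]$. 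Two smaller points: the covering step operates on the i.i.d.\ triple $(U^n,X^n,W^n)$ and uses the classical covering lemma, not a Markov one; and the packing step must cope with the dependence of $Y^n_b$ on block $b{-}1$ through the boundary state $Y_{b-1,n}$, which the paper handles via a joint packing lemma that conditions on this state and invokes the Markov property to restore conditional independence.
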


The first part of Theorem \ref{thm: main} is an inner bound of the achievable distribution region, whereas the second part is the outer bound. Although the inner and outer bounds share the same information constraint \eqref{eq: target info constraint}, the outer bound is generally looser, as it is defined over a larger class of joint distributions.

When the Markov channel reduces to a \ac{dmc} (e.g. take $Y'=\varnothing$), our inner bound reduces to the coordination coding result for joint source–channel coding with strictly causal encoders established in \cite[Theorem~3]{cuff2011hybrid}\footnote{Given the channel structure $(U,W) -\!\!\!\!\minuso\!\!\!\!- X -\!\!\!\!\minuso\!\!\!\!- Y$, both results coincide.}.

The auxiliary random variable $W$ plays a role analogous to that in coordination coding for \ac{dmc} and channels with state \cite{cuff2011hybrid,choudhuri2013causal,le2015empirical}: it facilitates the encoder to communicate compressed source information to the decoder in order to create the desired coordination. 
\vspace{-0.1cm}

\begin{remark}[Source-Channel Separation]
In the case when the random variables of the source $(U,V)$ are independent of those of the channel $(X,Y',Y)$, i.e., 
\begin{align*}
    \mathcal{P}_{U,X,Y',Y,V} = \mathcal{P}_{U,V}\cdot\mathcal{P}_{X,Y',Y},
\end{align*}
the information constraint \eqref{eq: target info constraint} reduces to a form analogous to Shannon’s source–channel separation theorem \cite{shannon1948mathematical}. In particular,
\vspace{-0.4cm}
\begin{align*}
    0&\leq I(X;Y|Y') - I(U;W|X)\\
    &\stackrel{\text{(a)}}{\leq} I(X;Y|Y') - I(U;V|X)\\
    & = I(X;Y|Y') - I(U;V)
\end{align*}
where the inequality (a) comes from the Markov chain $V-\!\!\!\!\minuso\!\!\!\!-(X,Y,W) -\!\!\!\!\minuso\!\!\!\!- (U,Y')$ at the noncausal decoder and the data processing inequality. Inequality (a) becomes an equality if and only if we take $W = U$. The resulting expression consists the channel capacity expression for Markov channel derived in \cite{wu2025actions} with feedback, minus the source rate required at the decoder.

\end{remark}

\vspace{-0.1cm}
The rest of the paper focuses on the proof of the main theorem. To establish the inner bound, we introduce a new notion of typicality -- the input-driven Markov typicality -- for sequences $(X^n,Y^n)$ in Sec. \ref{app: input-driven typicality}. The proof of the inner bound is a random-coding achievability argument, given in the Sec. \ref{app: achievability}. Finally, the outer bound is proved in Sec. \ref{app: converse}.

\vspace{0.3cm}


\section{Input-driven Markov Typicality}\label{app: input-driven typicality}
In this section, we introduce the input-driven Markov typicality for two sequences $(X^n,Y^n)$, as an extension to the strong Markov typicality for a single Markov sequence analyzed in \cite{csiszar2002method,davisson2003error,huang2015linear}.  We consider $X^n\sim\mathcal{P}_X^{\otimes n}$ i.i.d. generated, and $Y^n$, a Markov chain with an arbitrary starting state $Y_0$, and  $Y_t\sim\mathcal{W}_{Y|X,Y'}(\cdot|X_t,Y_{t-1})$ conditioning on $X_t$ for every $t=1,...,n$. In other words,
\vspace{-0.2cm}
\begin{align}\label{eq: x^ny^n generation}
    (X^n,Y^n)\sim\mathcal{P}_X^{\otimes n}\cdot\prod_{t=1}^n \mathcal{W}(Y_t|X_t,Y_{t-1}).
\end{align} 
\vspace{-0.2cm}

Now, for simplicity, we denote the target product distribution by $\mathcal{Q}_{Y'XY}=\pi_{Y'}\mathcal{P}_X\mathcal{W}_{Y|X,Y'}$. 

\begin{definition}[Input-driven Markov Typicality]
    For sequences $(x^n,y^n)$ with starting state $y_0$, define the empirical distribution of joint symbols over adjacent indices of triplets:
    \vspace{-0.1cm}
    \begin{align*}
        Q^n_{Y'XY}(i,x,j) =& \frac{1}{n}\sum_{t=1}^n\mathbf{1}\{(y_{t-1},x_t,y_t) = (i,x,j)\},\\
        & x\in\mathcal{X}, (i,j)\in\mathcal{Y}^2.
    \end{align*}
    Define the set of input-driven Markov typical sequences by
    \begin{align*}
            \mathcal{T}_{\varepsilon}^{(n)}(\mathcal{Q}_{Y'XY}) = \{(x^n,y^n): \lVert Q_{Y'XY}^n - \mathcal{Q}_{Y'XY}\lVert_1\leq \varepsilon\}.
    \end{align*} 
    For a fixed $x^n\in\mathcal{X}^n$, the set of conditionally Markov typical sequences $\mathcal{T}_\varepsilon^{(n)}(\mathcal{Q}_{Y'XY}|x^n)$ is defined by
    \begin{align*}
         &\mathcal{T}_{\varepsilon}^{(n)}(\mathcal{Q}_{Y'XY}|x^n) \\
         &= \{y^n\in\mathcal{Y}^n: \lVert Q_{Y'XY}^n - \pi_{Y'} Q_X^n\mathcal{W}_{Y|X,Y'}\lVert_1\leq \varepsilon\}.
    \end{align*}
    where \begin{align*}
        Q_X^n(x) = \frac{1}{n}\sum_{t=1}^n\mathbf{1}\{x_t=x\}= \sum_{i,j\in\mathcal{Y}}Q^n_{Y'XY}(i,x,j),\forall x\in\mathcal{X} .
    \end{align*}
\end{definition}

We have the following theorems for the input-driven Markov typicality:
\begin{theorem}[Ergodicity]\label{thm: conditional controlled Markov typicality lemma} Let $(X^n,Y^n)$ generated according to \eqref{eq: x^ny^n generation}, then
\begin{align}
    \forall\varepsilon>0, \lim_{n\rightarrow\infty}\mathbb P((X^n,Y^n)\in\mathcal{T}_\varepsilon^{(n)}(\mathcal{Q}_{Y'XY})) = 1.
\end{align}
In fact, it converges almost surely.
\end{theorem}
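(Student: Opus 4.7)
The plan is to recognize $(X_t, Y_t)_{t\geq 1}$ as a Markov chain on the finite state space $\mathcal{X}\times\mathcal{Y}$ and invoke the ergodic theorem for empirical averages of functions of two adjacent states.

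First, I would observe that under \eqref{eq: x^ny^n generation}, the pair process $Z_t \triangleq (X_t, Y_t)$ is Markov on $\mathcal{X}\times\mathcal{Y}$, with
\begin{align*}
    \mathbb P(Z_t = (x,y)\mid Z_{t-1} = (x',y')) = \mathcal{P}_X(x)\mathcal{W}_{Y|X,Y'}(y|x,y'),
\end{align*}
which depends on $Z_{t-1}$ only through $y'$. The empirical distribution $Q^n_{Y'XY}$ is then the empirical distribution of consecutive pairs $(Z_{t-1}, Z_t)$ under the projection $(x',y',x,y)\mapsto(y',x,y)$.

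Next, I would use Assumption A to upgrade this to an ergodic chain. Under the stationary input $\mathcal{P}_X$, the $Y$-chain is irreducible and aperiodic on its unique recurrent class $\mathcal{R}_Y$ with equilibrium $\pi_Y$. It follows that the $Z$-chain has a unique recurrent class $\mathcal{X}\times\mathcal{R}_Y$, is irreducible and aperiodic there, and has equilibrium $\pi_Z(x,y) = \mathcal{P}_X(x)\pi_Y(y)$. The induced equilibrium on consecutive pairs is
\begin{align*}
    \pi_{Z',Z}(x',y',x,y) = \mathcal{P}_X(x')\pi_Y(y')\mathcal{P}_X(x)\mathcal{W}_{Y|X,Y'}(y|x,y'),
\end{align*}
whose marginal on $(y',x,y)$ is exactly $\mathcal{Q}_{Y'XY}$. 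The ergodic theorem for finite-state Markov chains then yields, for every bounded $h$ on $(\mathcal{X}\times\mathcal{Y})^2$ and any initial distribution on $Z_0$,
\begin{align*}
    \frac{1}{n}\sum_{t=1}^n h(Z_{t-1},Z_t) \xrightarrow[n\to\infty]{\text{a.s.}} \mathbb E_{\pi_{Z',Z}}[h].
\end{align*}
Specializing $h$ to indicators of triplets $(i,x,j)$ gives entrywise a.s.\ convergence $Q^n_{Y'XY}\to \mathcal{Q}_{Y'XY}$; since the alphabet is finite, this promotes to $\|Q^n_{Y'XY}-\mathcal{Q}_{Y'XY}\|_1\to 0$ almost surely, hence in probability, which is the claim.

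The main obstacle I expect is that Assumption A only supplies ergodicity on the recurrent class $\mathcal{X}\times\mathcal{R}_Y$, while the initial state $Y_0$ may be transient. I would address this by observing that under the assumption the hitting time $\tau$ of $\mathcal{X}\times\mathcal{R}_Y$ is a.s.\ finite with an exponentially decaying tail, so $\tau/n\to 0$ a.s.\ and the transient contribution of the indices $t\leq\tau$ to the empirical average vanishes uniformly over bounded $h$. After time $\tau$ the chain evolves inside its recurrent class, where the standard ergodic theorem applies.
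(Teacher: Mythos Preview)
Your approach is essentially the same as the paper's: build an auxiliary finite-state Markov chain from the sequences, identify its stationary law, and invoke the ergodic theorem. The paper works directly with $S_t=(Y_{t-1},X_t,Y_t)$, so that $Q^n_{Y'XY}$ is literally the empirical state distribution of $S$ and the stationary law is immediately $\mathcal{Q}_{Y'XY}$; your choice $Z_t=(X_t,Y_t)$ with the consecutive-pair functional is a small detour but leads to the same place.

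One correction is needed, though. Your claimed equilibrium $\pi_Z(x,y)=\mathcal{P}_X(x)\pi_Y(y)$ is generally false: plugging it into the balance equation gives $\mathcal{P}_X(x)\sum_{y'}\pi_Y(y')\mathcal{W}_{Y|X,Y'}(y|x,y')$, which equals $\mathcal{P}_X(x)\pi_Y(y)$ only if $\pi_Y$ is invariant for the kernel $\mathcal{W}(\cdot|x,\cdot)$ at every fixed $x$, whereas $\pi_Y$ is only invariant for the $\mathcal{P}_X$-averaged kernel. The correct stationary is $\pi_Z(x,y)=\mathcal{P}_X(x)\sum_{y'}\pi_Y(y')\mathcal{W}_{Y|X,Y'}(y|x,y')$. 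Fortunately this does not break your argument: the $Y$-marginal of $\pi_Z$ is still $\pi_Y$, so the true pair law $\pi_Z(x',y')\mathcal{P}_X(x)\mathcal{W}(y|x,y')$ has $(y',x,y)$-marginal $\pi_Y(y')\mathcal{P}_X(x)\mathcal{W}(y|x,y')=\mathcal{Q}_{Y'XY}$, and your indicators $h$ only depend on $(y',x,y)$. With this fix the proof goes through; alternatively, switching to the paper's triplet chain $S_t=(Y_{t-1},X_t,Y_t)$ avoids the issue entirely. Your explicit treatment of the transient initial state via the hitting time is more careful than the paper's, which simply cites ergodicity on the recurrent class.
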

This theorem shows that no matter what the starting symbol is, the empirical distribution of $(X^n,Y^n)$ always converges to their joint stationary equilibrium distribution $\mathcal{Q}_{Y'XY}$.

We denote $H(X)$ by the entropy for random variable $X\sim\mathcal{P}_X$ and $H(Y|X,Y')$ for jointly discrete random variables $(Y',X,Y)\sim\mathcal{Q}_{Y'XY}$ which quantifies the remaining randomness in $Y$ given the independent input $X$ and the past symbol $Y'$ in the Markov chain. Now, we have

\begin{theorem}[AEP and Cardinality Bound]\label{lemma: AEP and cardinality bound}
    For any $\delta>0$, there exists $\varepsilon_0>0,N_0\in\mathbb N$, such that $\forall \varepsilon<\varepsilon_0$, $n>N_0$, and $\forall (x^n,y^n)\in\mathcal{T}_\varepsilon^{(n)}(\mathcal{Q}_{Y'XY})$, if a random sequence $(X^n,Y^n)$ is generated according to \eqref{eq: x^ny^n generation}, then it satisfies that
   \begin{enumerate}
  \item AEP: $
  \begin{aligned}[t]
    2^{-n(H(X,Y\mid Y')+\delta)}
      &< \mathbb{P}\bigl((X^n,Y^n)=(x^n,y^n)\bigr) \\
      &< 2^{-n(H(X,Y\mid Y')-\delta)},
  \end{aligned}
  $
  \item Cardinality bound: $\abs{\mathcal{T}_\varepsilon^{(n)}(\mathcal{Q}_{Y'XY})}<2^{n(H(X,Y|Y')+\delta)}$.
\end{enumerate}
\end{theorem}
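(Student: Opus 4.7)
The plan is to derive both parts from a direct log-probability computation and then a standard counting argument.

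I would start from the generation rule \eqref{eq: x^ny^n generation}, writing
\[
\mathbb P\bigl((X^n,Y^n)=(x^n,y^n)\bigr)
=\prod_{t=1}^n \mathcal P_X(x_t)\,\mathcal W(y_t\mid x_t,y_{t-1}),
\]
and grouping the factors by value so that the negative log-probability becomes an inner product with the empirical type $Q^n_{Y'XY}$:
\[
-\tfrac{1}{n}\log\mathbb P
= -\sum_{x}Q_X^n(x)\log\mathcal P_X(x)
-\sum_{i,x,j}Q_{Y'XY}^n(i,x,j)\log\mathcal W(j\mid x,i),
\]
where $Q_X^n$ is the marginal of $Q_{Y'XY}^n$ on $\mathcal X$, so both sums are linear functionals of the same joint empirical type.

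Next I would use the typicality hypothesis $\lVert Q_{Y'XY}^n-\mathcal Q_{Y'XY}\rVert_1\le\varepsilon$ together with the boundedness of $\log\mathcal P_X$ and $\log\mathcal W$ on the support of $\mathcal Q_{Y'XY}$, say by some constant $M<\infty$. A standard $L^1$ continuity estimate then yields
\[
\Bigl|-\tfrac{1}{n}\log\mathbb P - H(X,Y\mid Y')\Bigr|\le 2M\varepsilon,
\]
where I used the identity $H(X,Y\mid Y')=H(X)+H(Y\mid X,Y')$, which holds because $X\indep Y'$ under $\mathcal Q_{Y'XY}=\pi_{Y'}\mathcal P_X\mathcal W_{Y\mid X,Y'}$. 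Choosing $\varepsilon_0<\delta/(2M)$ then delivers both AEP bounds at once. For the cardinality bound I would invoke only the AEP lower bound: each typical sequence has probability at least $2^{-n(H(X,Y\mid Y')+\delta)}$, and since the probabilities of disjoint sequences sum to at most one, $\bigl|\mathcal T_\varepsilon^{(n)}(\mathcal Q_{Y'XY})\bigr|\cdot 2^{-n(H(X,Y\mid Y')+\delta)}\le 1$, which rearranges to the stated bound (strict inequality is recovered by applying the AEP with a slightly smaller $\delta'<\delta$).

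The only delicate step, and the main obstacle, is the support issue: an $L^1$-typical sequence may in principle contain triples $(i,x,j)$ with $\mathcal W(j\mid x,i)=0$ or values $x$ with $\mathcal P_X(x)=0$, in which case $\mathbb P((X^n,Y^n)=(x^n,y^n))=0$ and the AEP lower bound fails trivially. I would handle this by intersecting $\mathcal T_\varepsilon^{(n)}$ with the set of sequences supported on $\mathrm{supp}(\mathcal Q_{Y'XY})$---either by sharpening the typicality definition in a strong-typicality fashion, or simply by discarding the offending sequences, which only shrinks $\mathcal T_\varepsilon^{(n)}$ and therefore does not affect the cardinality upper bound. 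This parallels the standard convention used in \cite{csiszar2002method,huang2015linear}.
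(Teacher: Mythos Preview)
Your approach is correct and essentially the same as the paper's: write $-\tfrac{1}{n}\log\mathbb P$ as a linear functional of the joint type $Q^n_{Y'XY}$, bound its deviation from $H(X)+H(Y\mid X,Y')=H(X,Y\mid Y')$ using $\lVert Q^n-\mathcal Q\rVert_1\le\varepsilon$ times the maximal log-weight on the support, and then obtain the cardinality bound from the AEP lower bound exactly as you describe. The only cosmetic difference is that the paper isolates an initial-condition constant $c=\mathbb P((X_1,Y_1)=(x_1,y_1))$ and uses $N_0$ to make $\tfrac{1}{n}\lvert\log c\rvert\le\delta/3$, whereas you fold the $t=1$ factor directly into the type and so never invoke $N_0$; conversely, your treatment of the zero-probability support issue is more explicit than the paper's.
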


More properties of the input-driven typicality and the proofs for the above theorems are given in the supplementary material available in \cite{zhao2026empirical}.

\vspace{0.5cm}
\section{Achievability Proof}\label{app: achievability}
The coding argument relies on the block-Markov coding scheme \cite{choudhuri2013causal}, extended to the coordination framework \cite{le2015empirical}. 

We consider a probability distribution $\mathcal{P}=\mathcal{P}_{U,X,Y',Y,V}$ decomposed as \eqref{eq: achievable target distr} that satisfies \eqref{eq: target info constraint} and \eqref{eq: target equilibrium}. There exists a $\delta>0$ and rate $R>0$ such that
    \begin{align}
        &R\geq I(U;W|X) + \delta,\label{eq: it constraint for covering lemma}\\
        &R\leq I(X;Y|Y') - \delta.\label{eq: it constraint for refined packing lemma}
    \end{align}
We consider a block-Markov random code $c\in\mathcal{C}(nB)$ over $B\in\mathbb N$ blocks of length $n\in\mathbb N$.

\textit{Random codebook:} We generate $|\mathcal{M}| = 2^{nR}$ sequences 
$X^n(m)$, where each drawn from the i.i.d. marginal distribution

\noindent$\mathcal{P}_X^{\otimes n}$ with index $m\in\mathcal{M}$. For each $m\in\mathcal{M}$, we generate the same number $|\mathcal{M}| = 2^{nR}$ of sequences $W^n(m,\hat{m})$ with index $\hat{m}\in\mathcal{M}$, drawn from the conditional distribution $\mathcal{P}_{W|X}^{\otimes n}$ conditioning on the sequence $X^n(m)$.

\textit{Encoding function:} Let $m_b$ denote the message generated during block $b\in[1:B]$. During the first block, without loss of generality, the encoder takes $m_1=1$ and returns $X^n(m_1)$. At the beginning of block $b\in\{2,...,B\}$, the encoder observes the sequence of source $U_{b-1}^n$ of the previous block $b-1$. It recalls the index $m_{b-1}\in\mathcal{M}$ of the sequence $X^n(m_{b-1})$ used for block $b-1$. It finds an index $m_b$ such that sequences
\begin{align*}
    (U^n_{b-1}, X^n(m_{b-1}),  W^n(m_{b-1}, m_b))\in\mathcal{T}_\varepsilon^{(n)}(\mathcal{P}_{U,X,W})
\end{align*}
are jointly typical. The encoder sends the sequence $X^n(m_b)$ corresponding to the current block $b$. We denote by $W_{b-1}^n\triangleq W^n(m_{b-1}, m_b),  X_b^n\triangleq X^n(m_b)$.

\textit{Channel Transmission: }During each block $b\in\{1,...,B\}$, at time instant $t=1$, the channel initializes by generating $Y_{b,1}\sim\mathcal{W}(\cdot|X_{b,1},Y_{b-1,n})$ conditioning on the current input $X_{b,1}$ and the last channel output $Y_{b-1,n}$ from the last block $b-1$. Then, the channel generates the current sequence $Y_{b,2}^n\sim\prod_{t=2}^n\mathcal{W}(Y_{b,t}|X_{b,t}, Y_{b,t-1})$ depending on $X_{b,2}^n$ corresponding to block $b$. Note that in this scheme, because the channel is Markov, the transmission for each block $b$ is \textit{not} independent.

\textit{Decoding function:} The decoder first returns $\Tilde{m}_1 = 1$. During block $b\in[2:B]$, the decoder recalls the past sequence $Y_{1,b-1}^n$ and the index $\Tilde{m}_{b-1}$ that corresponds to the sequence $\Tilde{X}_{b-1}^n = X^n(\Tilde{m}_{b-1})$. It observes the channel output $Y_{1,b}^n$ and finds the unique index $\Tilde{m}_b$ such that 
\begin{align*}
  &(Y_{b}^n, X^n(\Tilde{m}_b))\in\mathcal{T}_\varepsilon^{(n)}(\mathcal{P}_{X,Y',Y}),\\
  &(Y_{b-1}^n, X^n(\Tilde{m}_{b-1}), W^n(\Tilde{m}_{b-1}, \Tilde{m}_b))\in\mathcal{T}_\varepsilon^{(n)}(\mathcal{P}_{X,W,Y',Y})
\end{align*}
are input-driven Markov typical. We denote by $\Tilde{X}_{b}^n = X^n(\Tilde{m}_{b})$ and  $\Tilde{W}_{b-1}^n = W^n(\Tilde{m}_{b-1}, \Tilde{m}_{b})$ as our choice. The decoder non-causally generates $V_{b}^n\sim \mathcal{P}_{V|Y,X,W}^{\otimes n}$ depending on sequences $(Y_{b}^n, \Tilde{X}_{b}^n, \Tilde{W}_{b}^n)$ for $b\in[1:B-1]$. As for the last block, the decoder simply outputs an all zero sequence, i.e.,  $V_{B}^n = \mathbf{0}$. Usually, sequences are \textit{not} jointly typical in the last block.

\textit{Error Analysis: }Next, we show that given the above coding scheme, the generated sequences $(U_{b}^n, X_b^n, W_b^n, Y_b^n, V_b^n)$ for each block $b\in\{1,...,B-1\}$ are jointly (Markov) typical with respect to the joint distribution given in \eqref{eq: achievable target distr} with high probability. For every $\varepsilon >0$, there exists $N_0\in\mathbb N$, such that the probability of error events for all $n\geq N_0$:
\begin{align*}
    &\text{(a) } \mathbb E_c\left[\mathbb P\left(\forall m\in\mathcal{M}, (U^n_{b-1}, X^n(m_{b-1}),  W^n(m_{b-1}, m))\right.\right.\\
    &\qquad\qquad\qquad\qquad\qquad\qquad\qquad\notin\mathcal{T}_\varepsilon^{(n)}(\mathcal{P}_{UXW}))]\leq\frac{\varepsilon}{3},\\
    &\text{(b) } \mathbb E_c\sbrackets{\mathbb P\brackets{(X_b^n,  Y_b^n)\notin\mathcal{T}_\varepsilon^{(n)}(\mathcal{P}_{Y'XY})}}\leq \frac{\varepsilon}{3},\\
    &\text{(c) } \mathbb E_c\left\{\mathbb P\left(\exists  \Tilde{m}\neq m, \text{s.t. }\{(Y_{b}^n, X^n(\Tilde{m}))\in\mathcal{T}_\varepsilon^{(n)}(\mathcal{P}_{Y'XY})\} \cap \right.\right. \\
    &\left.\left.\!\!\!\!\!
    \{(Y_{b-1}^n, X^n(m_{b-1}), W^n(m_{b-1}, \Tilde{m}))\!\in\!\mathcal{T}_\varepsilon^{(n)}(\mathcal{P}_{WY'XY})\} \right)\right\}\!\leq\! \frac{\varepsilon}{3}.
\end{align*}
Here, (a) is guaranteed by covering lemma for classical i.i.d. sequences \cite{cover1999elements} and the condition \eqref{eq: it constraint for covering lemma}. Item (b) is guaranteed by the sequential generation of $(X^n,Y^n)$ and Theorem \ref{thm: conditional controlled Markov typicality lemma} for the input-driven Markov typicality. Item (c) is satisfied by a joint packing lemma shown in Sec. \ref{app sub: joint packing lemma} below, such that both (dependent) events hold when the rate constraint \eqref{eq: it constraint for refined packing lemma} is met.

We denote by $\Tilde{Q}^n\in\Delta(\mathcal{U}\times\mathcal{X}\times\mathcal{W}\times\mathcal{R}_Y\times\mathcal{R}_Y\times\mathcal{V})$ the empirical distribution of symbols $(U,X,W,Y',Y,V)$ over blocks $b=1,...,B-1$, where $(Y',Y)$ are the adjacent symbols of the Markov chain. We now show that $\Tilde{Q}^n$ is close to the empirical distribution $Q^n$ over all $B$ blocks for $B$ sufficiently large. We denote by $Q_B$ the empirical distribution of all symbols over the last block. Now,
\begin{align*}
    \lVert Q^n - \Tilde{Q}^n\lVert_{1}
    & = \lVert \frac{1}{B}\brackets{(B-1)\Tilde{Q}^n + Q_B} - \Tilde{Q}^n\lVert_{1}\\
    & = \frac{1}{B}\lVert Q_B - \Tilde{Q}^n\lVert_{1} \\
    &\leq \frac{2}{B}\cdot\abs{\mathcal{U}\times\mathcal{X}\times\mathcal{W}\times\mathcal{R}_Y\times\mathcal{R}_Y\times\mathcal{V}} \leq \varepsilon
\end{align*}
when $B\geq  \frac{2}{\varepsilon}\abs{\mathcal{U}\times\mathcal{X}\times\mathcal{W}\times\mathcal{R}_Y\times\mathcal{R}_Y\times\mathcal{V}}$. Then, the expected error probability
\begin{align*}
    \mathbb E_c[\mathbb P_e(c)]&= \mathbb E_c\sbrackets{\mathbb P\brackets{\lVert Q^n-\mathcal{P}\lVert_{1}\geq 2\varepsilon}}\\
    &\leq\mathbb E_c\sbrackets{\mathbb P\brackets{\lVert Q^n-\Tilde{Q}^n\lVert_{1}+\lVert\Tilde{Q}^n-\mathcal{P}\lVert_{1}\geq 2\varepsilon}}\\
    &\leq \mathbb E_c\sbrackets{\mathbb P\brackets{\lVert \Tilde{Q}^n-\mathcal{P}\lVert_{1}\geq \varepsilon}}\leq \varepsilon.
\end{align*}
This implies the existence of a code $c^*\in\mathcal{C}(nB)$ with an error probability below $\varepsilon$ for all $n\geq N_0B$.\hfill \qedsymbol{}


\subsection{Joint Packing Lemma}\label{app sub: joint packing lemma}
We denote the following events for block $b$ by
\begin{align*}
    &A_b(\Tilde{m})\triangleq \{(Y_{b}^n, X^n(\Tilde{m}))\in\mathcal{T}_\varepsilon^{(n)}(\mathcal{P}_{Y'XY})\},\\
    &B_{b}(\Tilde{m})\triangleq\{(Y_{b}^n, X^n(m_{b}), W^n(m_{b}, \Tilde{m}))\in\mathcal{T}_\varepsilon^{(n)}(\mathcal{P}_{WY'XY})\}.
\end{align*}
Note that, compared to the joint packing lemma for the \ac{dmc} scenario, e.g., in \cite{le2015empirical}, here, since the channel has memory, $A_b(\Tilde{m})$ and $B_{b-1}(\Tilde{m})$ are not independent anymore, due to the channel state $Y_{b-1,n}$. However, by conditioning on $Y_{b-1,n}=y'$, applying the Markov property, we can have conditional independence and therefore show a similar result.

\begin{lemma}[Joint Packing Lemma]
For each block $b\in [2:B]$ and for each previous message $m_{b-1}$, if $R\leq  I(X;Y|Y') - \delta$, then, $\forall \varepsilon>0$, $\exists N_0\in\mathbb N$, such that for all $n\geq N$, 
\begin{align*}
   \mathbb E_c\left\{\mathbb P\left(\exists  \Tilde{m}\neq m, \text{s.t. }\ A_b(\Tilde{m}) \cap B_{b-1}(\Tilde{m})\right)\right\}\leq \varepsilon
\end{align*}
\end{lemma}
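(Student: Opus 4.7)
The plan is threefold: first, apply a union bound over the $2^{nR}$ candidate indices $\tilde m\neq m_b$; second, condition on the inter-block channel state $Y_{b-1,n}$ so as to decouple the events $A_b(\tilde m)$ and $B_{b-1}(\tilde m)$; and third, exploit the input-driven Markov typicality of Theorem \ref{lemma: AEP and cardinality bound} to obtain an exponential packing-type bound on $\mathbb{P}(A_b)$.

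By codebook symmetry, it suffices to bound $\mathbb{E}_c[\mathbb{P}(A_b(\tilde m_0)\cap B_{b-1}(\tilde m_0))]$ for a single representative index $\tilde m_0\neq m_b,m_{b-1}$, since $X^n(\tilde m_0)$ and $W^n(m_{b-1},\tilde m_0)$ are then drawn independently of every codeword that appears in the actual transmission. The coupling between $A_b$ and $B_{b-1}$ arises only through the boundary state $Y_{b-1,n}$: conditioning on $Y_{b-1,n}=y'$, the Markov property \eqref{eq: markov chain, markov channel} ensures that $Y_b^n$—generated via $\mathcal{W}_{Y|X,Y'}$ with input $X^n(m_b)$ starting at $y'$—is conditionally independent of $(Y_{b-1}^n,X^n(m_{b-1}),W^n(m_{b-1},\tilde m_0))$. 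Together with the codebook independence of $X^n(\tilde m_0)$, this yields $\mathbb{P}(A_b\cap B_{b-1}\mid y')=\mathbb{P}(A_b\mid y')\cdot\mathbb{P}(B_{b-1}\mid y')\leq \mathbb{P}(A_b\mid y')$; the factor $\mathbb{P}(B_{b-1}\mid y')$ is in fact close to one by Theorem \ref{thm: conditional controlled Markov typicality lemma} and the conditional independence $W\indep (Y',Y)\mid X$ encoded in \eqref{eq: achievable target distr}, but the trivial bound suffices here.

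For fixed $Y_b^n=y^n$, I would bound the set $\{x^n:(x^n,y^n)\in\mathcal{T}_\varepsilon^{(n)}(\mathcal{Q}_{Y'XY})\}$ of conditionally typical sequences by $2^{n(H(X|Y,Y')+\delta')}$ (extracted from the joint cardinality bound of Theorem \ref{lemma: AEP and cardinality bound}), while the i.i.d.\ generation $X^n(\tilde m_0)\sim\mathcal{P}_X^{\otimes n}$ assigns probability at most $2^{-n(H(X)-\delta')}$ to any typical $x^n$. Using $X\indep Y'$ under \eqref{eq: achievable target distr}, $H(X)-H(X|Y,Y')=I(X;Y,Y')=I(X;Y|Y')$, yielding $\mathbb{P}(A_b\mid y')\leq 2^{-n(I(X;Y|Y')-2\delta')}$. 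Combined with the union bound, $\mathbb{E}_c[\mathbb{P}(\cdot)]\leq 2^{nR-n(I(X;Y|Y')-2\delta')}$, which is below $\varepsilon$ for sufficiently small $\delta'$ and sufficiently large $n$ whenever $R\leq I(X;Y|Y')-\delta$.

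The main obstacle is the packing-type cardinality estimate itself: extracting the conditional count $2^{nH(X|Y,Y')}$ of typical $x^n$ given a Markov-typical $y^n$ from the joint bound $2^{nH(X,Y|Y')}$ provided by Theorem \ref{lemma: AEP and cardinality bound}. This is the input-driven Markov analog of the classical packing lemma, and will likely require either a matching lower bound on the number of typical $y^n$ under the marginal Markov transition $\sum_x\mathcal{P}_X(x)\mathcal{W}_{Y|X,Y'}$, or a direct method-of-types argument that respects the Markov structure of the output process.
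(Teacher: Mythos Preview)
Your overall architecture---union bound over $\tilde m$, conditioning on the inter-block boundary state $Y_{b-1,n}$ to decouple $A_b$ from $B_{b-1}$ via the Markov property, and the trivial bound $\mathbb{P}(B_{b-1}\mid y')\leq 1$---matches the paper exactly. The divergence is in how you estimate $\mathbb{P}(A_b(\tilde m)\mid y')$, and here the paper sidesteps precisely the obstacle you flagged.

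You propose to fix $y^n$ and count conditionally typical $x^n$, which requires the bound $|\{x^n:(x^n,y^n)\in\mathcal{T}_\varepsilon^{(n)}\}|\leq 2^{n(H(X|Y,Y')+\delta')}$ that Theorem~\ref{lemma: AEP and cardinality bound} does not supply. The paper instead keeps the sum over the \emph{joint} typical set and bounds each probability factor separately: it invokes the marginal strong-Markov AEP on $Y^n$ (citing \cite[Prop.~4.1.1]{huang2015linear}) to get $\mathbb{P}(Y_b^n=y^n\mid Y_{b-1,n}=y')\leq 2^{-n(H(Y|Y')-\varepsilon)}$ uniformly in the boundary state, combines this with the i.i.d.\ bound $\mathbb{P}(X^n(\tilde m)=x^n)\leq 2^{-n(H(X)-\varepsilon)}$ (valid since $X^n(\tilde m)$ is independent of $Y_b^n$ for $\tilde m\neq m_b$), and then applies only the \emph{joint} cardinality bound $|\mathcal{T}_\varepsilon^{(n)}|\leq 2^{n(H(X,Y|Y')+\varepsilon)}$ from Theorem~\ref{lemma: AEP and cardinality bound}. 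The exponent collapses to $H(X,Y|Y')-H(Y|Y')-H(X)=-I(X;Y|Y')$ (using $X\indep Y'$), identical to yours, but obtained without any conditional count. So the ``matching lower bound on typical $y^n$'' you contemplated is unnecessary: the marginal Markov AEP \emph{upper bound on probability} is what replaces your conditional cardinality step, and that tool is already available off the shelf. Your route would work once the conditional cardinality lemma is proved, but the paper's route is strictly easier.
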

\noindent\textit{Proof. }Consider $\varepsilon>0$ that satisfies $4\varepsilon <\delta$. We have
\begin{align*}
    R - I(Y;X|Y')  + 3\varepsilon \leq -\delta + 3\varepsilon <-\varepsilon.
\end{align*}
    \begin{align*}
    &\mathbb E_c\left\{\mathbb P\left(\exists  \Tilde{m}\neq m, \text{s.t. }\ A_b(\Tilde{m}) \cap B_{b-1}(\Tilde{m})\right)\right\} \\
    &\stackrel{}{\leq} \sum_{\Tilde{m}\neq m}\mathbb E_c\left\{\mathbb P\left(\ A_b(\Tilde{m}) \cap B_{b-1}(\Tilde{m})\right)\right\}\\
    &\stackrel{}{=}\sum_{\Tilde{m}\neq m}\mathbb E_c\left\{\sum_{y}\mathbb P(Y_{b-1,n} = y)\right.\\
    &\\
    &\\
    &\qquad\qquad\left.
    \cdot\mathbb P\left( A_b(\Tilde{m}) \cap B_{b-1}(\Tilde{m})\Bigg |Y_{b-1,n} = y \right)\right\}\\
    &= \sum_{\Tilde{m}\neq m}\mathbb E_c\left\{\sum_{y}\mathbb P(Y_{b-1,n} = y)\cdot\mathbb P\left( B_{b-1}(\Tilde{m})\Bigg | Y_{b-1,n} = y \right)\right.\\
    &\qquad\qquad\left.\cdot\mathbb P\left(A_b(\Tilde{m})\Bigg | B_{b-1}(\Tilde{m}), Y_{b-1,n} = y  \right)\right\}\\
    &\stackrel{\text{(a)}}{=}\sum_{\Tilde{m}\neq m}\mathbb E_c\left\{\sum_{y}\mathbb P(Y_{b-1,n} = y)\cdot\mathbb P\left( B_{b-1}(\Tilde{m})\Bigg |Y_{b-1,n} = y \right)\right.\\
    &\qquad\qquad\left. \cdot\mathbb P\left(A_b(\Tilde{m})\Bigg|  Y_{b-1,n} = y  \right)\right\}\\
    &=\sum_{\Tilde{m}\neq m}\mathbb E_c\left\{\sum_{y}\mathbb P(Y_{b-1,n} = y)\cdot\mathbb P\left( B_{b-1}(\Tilde{m})\Bigg |Y_{b-1,n} = y \right)\right.\\
     &\quad \left.\!\!\cdot \!\!\sum_{(y^n,x^n)\in\mathcal{T}_\varepsilon^{(n)}(\mathcal{P})}\mathbb P\left(Y_{b}^n=y^n X^n(\Tilde{m}))=x^n\Bigg|  Y_{b-1,n} = y  \right)\right\}\\
     &\stackrel{\text{(b)}}{=}\sum_{\Tilde{m}\neq m}\mathbb E_c\left\{\sum_{y}\mathbb P(Y_{b-1,n} = y)\cdot\mathbb P\left( B_{b-1}(\Tilde{m})\Bigg |Y_{b-1,n} = y \right)\right.\\
     &\quad\left.\cdot \!\!\!\!\!\!\!\!\!\!\sum_{(y^n,x^n)\in\mathcal{T}_\varepsilon^{(n)}(\mathcal{P})}\mathbb P\!\!\left(\!\!Y_{b}^n=y^n \Bigg|  Y_{b-1,n}\! =\! y\!  \right)\!\!\cdot\! \mathbb P\left(X^n(\Tilde{m}))=x^n\right)\right\}\\
      &\stackrel{\text{(c)}}{\leq }\sum_{\Tilde{m}\neq m}\mathbb E_c\left\{\sum_{y}\mathbb P(Y_{b-1,n} = y)\cdot\mathbb P\left( B_{b-1}(\Tilde{m})\Bigg |Y_{b-1,n} = y \right)\right.\\
      &\quad\quad \left.\cdot \sum_{(y^n,x^n)\in\mathcal{T}_\varepsilon^{(n)}(\mathcal{P})} 2^{-n(H(Y|Y') - \varepsilon)}\times 2^{-n(H(X) - \varepsilon)}  \right\}\\
      &\stackrel{\text{(d)}}{\leq } 2^{n(H(Y,X|Y') - H(Y|Y') - H(X) + 3\varepsilon)}\cdot \sum_{\Tilde{m}\neq m}\mathbb E_c\left\{\mathbb P( B_{b-1}(\Tilde{m}) ) \right\}\\
      &\stackrel{\text{(e)}}{\leq }2^{nR}\times 2^{n(-I(X;Y|Y')+3\varepsilon)}\\
                &\leq 2^{-n\varepsilon},
\end{align*}
where,\\
 (a) Markov property of the channel output generation: the dependence of the current block's sequence on the last block only through its last symbol $Y_{b-1,n}$;\\
 (b) the independence of the random variable $Y_{b}^n$ given $Y_{b-1,n}$ with $X^n(\Tilde{m})$ where $\Tilde{m}\neq m$;\\
 (c) AEP of the strong Markov typicality for Markov sequence $Y^n$, which leverages ergodicity properties that the effect of the boundary initial state can be averaged out, see \cite[Proposition 4.1.1]{huang2015linear}, and also $X\indep Y'$;\\
 (d) cardinality bound for the input-driven Markov typical sequence, i.e., Theorem \ref{lemma: AEP and cardinality bound};\\
 (e) $\mathbb E_c\left\{\mathbb P( B_{b-1}(\Tilde{m}) ) \right\}\leq 1$ and the number of codewords $|\mathcal{M}|=2^{nR}$.\hfill \qedsymbol{}

\vspace{0.9cm}

\section{Converse Proof}\label{app: converse}
For $n\in\mathbb N$, we consider a code $c\in\mathcal{C}(n)$ that induces a joint sequential distribution of form \eqref{eq: joint sequential distribution}. Then,
\begin{align*}
    &0
    = I(U^n;Y^n) - I(U^n;Y^n)\\
    &\stackrel{\text{(a)}}{=} \sum_t I(U^n;Y_t|Y^{t-1}) - \sum_t I(U_t; Y^n|U^{t-1})\\
    &\stackrel{\text{(b)}}{=} \sum_t I(U^{t-1};Y_t|Y^{t-1}) - \sum_t I(U_t; Y^n|U^{t-1})\\
    &\leq \sum_t I(U^{t-1};Y_t|Y^{t-1}) - \sum_t I(U_t; Y^{t-1}, Y^n_{t+1}|U^{t-1})\\
    &\stackrel{\text{(c)}}{=} \sum_t I(X_t, U^{t-1};Y_t|Y^{t-1})\! -\! \sum_t I(U_t; \!Y^{t-1},\! Y^n_{t+1}|U^{t-1},X_t)\\
    &\stackrel{\text{(d)}}{=} \sum_t I(X_t, U^{t-1};Y_t|Y^{t-1}) \!-\! \sum_t I(U_t;U^{t-1},\! Y^{t-1}, \!Y^n_{t+1}|X_t)\\
    & \stackrel{\text{(e)}}{\leq}  \sum_t I(X_t;Y_t|Y_{t-1}) - \sum_t I(U_t;W_t|X_t)\\
     & \stackrel{\text{(f)}}{=} n\cdot\brackets{I(X_T;Y_T|Y_{T-1},T) - I(U_T;W_T|X_T,T)}\\
     & \stackrel{}{=} n\cdot(H(Y_T|Y_{T-1},T) - H(Y_T|X_T,Y_{T-1},T) \\
     &\qquad\qquad\qquad\qquad\qquad  -I(U_T;W_T,T|X_T))\\
     & \stackrel{\text{(g)}}{\leq} n\cdot(H(Y_T|Y_{T-1})\! -\! H(Y_T|X_T,\!Y_{T-1})\! -\!I(U_T;W_T,\!T|X_T))\\
      & =n\cdot\brackets{I(X_T;Y_T|Y_{T-1}) - I(U_T;W_T,T|X_T)}\\
      & \stackrel{\text{(h)}}{=} n\cdot\brackets{I(X;Y|Y') - I(U;W|X)},
\end{align*}
where,\\
(a) chain rule of mutual information;\\
(b) because the source is generated i.i.d., we have $U_t^n\indep (U^{t-1},Y^{t-1},Y_t)$, hence $I(U_t^n;Y_t|Y^{t-1},U^{t-1}) = 0$;\\
(c) deterministic strictly causal encoding function, i.e., $X_t = f^{(t)}(U^{t-1})$;\\
(d) i.i.d. source and strictly causal encoding, we hence have $I(U_t;U^{t-1}|X_t) = 0$;\\
(e) because of conditioning reduces entropy, and the Markov chain $Y_t -\!\!\!\!\minuso\!\!\!\!- (X_t,Y_{t-1}) -\!\!\!\!\minuso\!\!\!\!- (U^{t-1},Y^{t-2})$. Also, in the second term, we identify  $W_t = (U^{t-1},Y^{t-1}, Y_{t+1}^n)$;\\
(f) the introduction of the uniform random variable $T$ over $\{1,...,n\}$ and the introduction of the corresponding mean random variables $U_T,X_T,W_T,Y_{T},V_T$ and $Y_{T-1}$ represents the previous symbol of the current state $Y_T$;\\
(g) conditioning reduces entropy, and stationarity of the Markov channel;\\
(h) the assignment of $U=U_T, X = X_T, W=(W_T,T), Y'=Y_{T-1}, Y=Y_T,V=V_T$.

Lastly, we show that, random variables $(U,X,W,Y',Y,V)$ defined above satisfy the following Markov chains
\begin{itemize}
    \item $U\indep(X,Y',Y)$ comes from the strictly causal encoding,
    \item $V -\!\!\!\!\minuso\!\!\!\!- (X,Y,W) -\!\!\!\!\minuso\!\!\!\!- (U,Y')$ comes from the noncausal decoding, and the fact that $Y^n$ is included in $(W_t,Y_t)$ for all $t\in\{1,...,n\}$, hence is included in $(W_T,T,Y_T) = (W,Y)$.\hfill \qedsymbol{}
\end{itemize}

\bibliographystyle{ieeetr}
\bibliography{IEEEabrv,main}

\newpage
\onecolumn

\appendices
\section{Property of the Input-Driven Markov Typicality}

Under Assumption A, the induced transition matrix $T = [T_{i,j}]_{i,j\in\mathcal{Y}}$ of the Markov chain $Y^n$ does not depend on time $t\in[1:n]$, where
\begin{align*}
    T_{i,j} &= \mathbb P(Y_k=j|Y_{k-1}=i)\\
    &= \sum_{x\in\mathcal{X}}\mathcal{P}_{X}(x)\mathcal{W}(Y_k=j|X_k=x,Y_{k-1}=i).\label{eq: transition distribution of markov Y}
\end{align*}

We can prove the following property of the input-driven Markov typicality:

\begin{proposition}\label{prop: joint controlled Markov typical}
    For $n\in\mathbb N$ and $\varepsilon>0$, the typical sequences satisfy the following properties\\
    1) If $(x^n,y^n)\in\mathcal{T}_\varepsilon^{(n)}(\mathcal{Q}_{Y'XY})$, then $x^n\in\mathcal{T}_\varepsilon^{(n)}(\mathcal{ P}_X)$, $y^n\in\mathcal{T}_\varepsilon^{(n)}(\pi\cdot T)$, $y^n\in\mathcal{T}_{2\varepsilon}^{(n)}(\mathcal{Q}_{Y'XY}|x^n)$ conditional typical, where $T$ is the marginalized transition matrix of Markov sequence $Y^n$ given in \eqref{eq: transition distribution of markov Y}.\\
    2) If $x^n\in\mathcal{T}_\varepsilon^{(n)}(\mathcal{ P}_X)$, $y^n\in\mathcal{T}_{2\varepsilon}^{(n)}(\mathcal{Q}_{Y'XY}|x^n)$, then $(x^n,y^n)\in\mathcal{T}_{2\varepsilon}^{(n)}(\mathcal{Q}_{Y'XY})$
\end{proposition}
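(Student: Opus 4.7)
The plan is to prove both parts by combining two routine tools: marginalization, which never increases the total-variation distance, and the triangle inequality applied with the intermediate distribution $\pi_{Y'}\,Q_X^n\,\mathcal{W}_{Y|X,Y'}$ obtained by coupling the empirical input type $Q_X^n$ with the target equilibrium $\pi_{Y'}$ and the Markov kernel. The key algebraic observation, used repeatedly, is that $(\pi_{Y'}\,Q_X^n\,\mathcal{W}_{Y|X,Y'})(i,x,j) = \pi_{Y'}(i)\,Q_X^n(x)\,\mathcal{W}_{Y|X,Y'}(j|x,i)$ is linear in $Q_X^n$, so substituting $\mathcal{P}_X$ versus $Q_X^n$ and subtracting produces a kernel sum that collapses to $\norm{Q_X^n-\mathcal{P}_X}_1$ after using that $\sum_j \mathcal{W}_{Y|X,Y'}(j|x,i)=1$ and $\sum_i \pi_{Y'}(i)=1$.

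For Part 1, the $x^n$-marginal claim follows by summing the definition of $Q^n_{Y'XY}(i,x,j)$ over $(i,j)\in\mathcal{Y}^2$, which yields $Q_X^n(x)$, and doing the same to $\mathcal{Q}_{Y'XY}$, which yields $\mathcal{P}_X(x)$; the triangle inequality on the inner sum then gives $\norm{Q_X^n-\mathcal{P}_X}_1 \leq \norm{Q^n_{Y'XY}-\mathcal{Q}_{Y'XY}}_1 \leq \varepsilon$. The $y^n$-marginal claim is obtained analogously by summing over $x\in\mathcal{X}$, using $\sum_x \mathcal{Q}_{Y'XY}(i,x,j) = \pi_{Y'}(i)\,T_{Y|Y'}(j|i)$ which is immediate from the definition of $T_{Y|Y'}$ in \eqref{eq: transition distribution of markov Y}. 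For the conditional typicality claim, I insert the intermediate distribution and use
\[\norm{Q^n_{Y'XY} - \pi_{Y'}\,Q_X^n\,\mathcal{W}_{Y|X,Y'}}_1 \leq \norm{Q^n_{Y'XY} - \mathcal{Q}_{Y'XY}}_1 + \norm{\mathcal{Q}_{Y'XY} - \pi_{Y'}\,Q_X^n\,\mathcal{W}_{Y|X,Y'}}_1,\]
where the first term is at most $\varepsilon$ by hypothesis and the second, by the linearity observation, reduces exactly to $\norm{Q_X^n-\mathcal{P}_X}_1 \leq \varepsilon$, for a total bound of $2\varepsilon$.

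For Part 2, the same triangle inequality is applied in reverse,
\[\norm{Q^n_{Y'XY}-\mathcal{Q}_{Y'XY}}_1 \leq \norm{Q^n_{Y'XY}-\pi_{Y'}\,Q_X^n\,\mathcal{W}_{Y|X,Y'}}_1 + \norm{\pi_{Y'}\,(Q_X^n-\mathcal{P}_X)\,\mathcal{W}_{Y|X,Y'}}_1,\]
with the first term at most $2\varepsilon$ by hypothesis and the second at most $\varepsilon$ by the same kernel calculation. Strictly speaking this yields a $3\varepsilon$ bound rather than $2\varepsilon$; the minor constant gap can be absorbed either by tightening the hypothesis to $\mathcal{T}_{\varepsilon}^{(n)}(\mathcal{Q}_{Y'XY}\mid x^n)$ or by relaxing the conclusion to $\mathcal{T}_{3\varepsilon}^{(n)}(\mathcal{Q}_{Y'XY})$, leaving the argument otherwise unchanged.

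There is no deep obstacle. The whole proof rests on the bilinearity of the kernel product together with the fact that each conditional $\mathcal{W}_{Y|X,Y'}(\cdot\mid x,y')$ and the equilibrium $\pi_{Y'}$ are probability distributions, so every marginal sum collapses to a factor of one. The only care required is bookkeeping of the $\varepsilon$-constants and the recognition that the $(Y',Y)$-marginal of $\mathcal{Q}_{Y'XY}$ is precisely $\pi_{Y'}T_{Y|Y'}$, which follows directly from the definition of $T_{Y|Y'}$ and needs nothing beyond the existence of $\pi_{Y'}$ guaranteed by Assumption A.
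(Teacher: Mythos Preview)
Your approach is essentially identical to the paper's: marginalization via the inner-sum triangle inequality for the $x^n$ and $y^n$ claims, then the triangle inequality with the intermediate distribution $\pi_{Y'}\,Q_X^n\,\mathcal{W}_{Y|X,Y'}$ for both the conditional-typicality claim and Part~2, with the kernel sum collapsing to $\norm{Q_X^n-\mathcal{P}_X}_1$ exactly as you describe. Your observation about the constant in Part~2 is well taken: the paper carries out the same two-term split and simply writes ``$\leq 2\varepsilon$'' at the end, but under the stated hypothesis $y^n\in\mathcal{T}_{2\varepsilon}^{(n)}(\mathcal{Q}_{Y'XY}\mid x^n)$ the two terms are bounded by $2\varepsilon$ and $\varepsilon$ respectively, so the argument as written yields $3\varepsilon$; the $2\varepsilon$ conclusion would follow if the conditional-typicality hypothesis used parameter $\varepsilon$ instead.
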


\begin{proof}
    1) Proof of item 1: If $(x^n,y^n)\in\mathcal{T}_\varepsilon^{(n)}(\mathcal{Q}_{Y'XY})$, then
    \begin{align*}
        \varepsilon&\geq\lVert Q_{Y'XY}^n - \mathcal{Q}_{Y'XY}\lVert_1\\
        &=\sum_{x\in\mathcal{X},i,j\in\mathcal{Y}}\abs{Q_{Y'XY}^n(i,x,j) - \pi_{Y'}(i)\mathcal{P}_X(x)\mathcal{W}(j|x,i)}\\
        &\geq \sum_x\abs{\sum_{i,j}\brackets{Q_{Y'XY}^n(i,x,j) - \pi_{Y'}(i)\mathcal{P}_X(x)\mathcal{W}(j|x,i)}}\\
        & = \sum_x\abs{Q_X^n(x) - \mathcal{P}_X(x)}\\
        &=\lVert Q_{X}^n - \mathcal{P}_X\lVert_1
    \end{align*}
    Similarly,
    \begin{align*}
        \varepsilon&\geq\lVert Q_{Y'XY}^n - \mathcal{Q}_{Y'XY}\lVert_1\\
        &=\sum_{x\in\mathcal{X},i,j\in\mathcal{Y}}\abs{Q_{Y'XY}^n(i,x,j) - \pi_{Y'}(i)\mathcal{P}_X(x)\mathcal{W}(j|x,i)}\\
        &\geq \sum_{i,j}\abs{\sum_{x}\brackets{Q_{Y'XY}^n(i,x,j) - \pi_{Y'}(i)\mathcal{P}_X(x)\mathcal{W}(j|x,i)}}\\
        & = \sum_{i,j}\abs{Q_Y^n(i,j) - \pi_{Y'}(i) T(j|i)}\\
        &= \lVert Q_{Y}^n - \pi\cdot T\lVert_1
    \end{align*}
    Moreover, we also have
    \begin{align*}
        &\lVert Q_{Y'XY}^n - \pi_{Y'} Q_X^n\mathcal{W}\lVert_1\\
        &=\lVert Q_{Y'XY}^n- \mathcal{Q}_{Y'XY} +  \mathcal{Q}_{Y'XY} - \pi_{Y'} Q_X^n\mathcal{W}\lVert_1\\
        &\leq \lVert Q_{Y'XY}^n- \mathcal{Q}_{Y'XY}\lVert_1 + \lVert\mathcal{Q}_{Y'XY} - \pi_{Y'} Q_X^n\mathcal{W}\lVert_1\\
        &\leq \lVert Q_{XY}^n- \mathcal{Q}_{Y'XY}\lVert_1 + \sum_{x,i,j}\abs{\pi_{Y'}(i)\mathcal{P}_X(x)\mathcal{W}(j|x,i) - \pi_{Y'}(i)Q_X^n(x)\mathcal{W}(j|x,i)}\\
        &\leq \lVert Q_{XY}^n- \mathcal{Q}_{Y'XY}\lVert_1 + \sum_{x,i,j}\abs{\mathcal{P}_X(x) - Q_X^n(x)}\cdot\brackets{\pi_{Y'}(i)\mathcal{W}(j|x,i)}\\
        &= \lVert Q_{XY}^n- \mathcal{Q}_{Y'XY}\lVert_1 + \sum_{x}\abs{\mathcal{P}_X(x) - Q_X^n(x)}\cdot \sum_{i,j}\brackets{\pi_{Y'}(i)\mathcal{W}(j|x,i)}\\
        &=\lVert Q_{XY}^n- \mathcal{Q}_{Y'XY}\lVert_1 + \lVert \mathcal{P}_X - Q_X^n\lVert_1\\
        &\leq 2\cdot \varepsilon.
    \end{align*}

    2) Proof of item 2: 
    \begin{align*}
        &\lVert Q_{Y'XY}^n - \pi_{Y'}\mathcal{P}_X\mathcal{W}\lVert_1\\
        &=\lVert Q_{Y'XY}^n - \pi_{Y'} Q_X^n\mathcal{W} + \pi_{Y'} Q_X^n\mathcal{W} -\pi_{Y'}\mathcal{P}_X\mathcal{W}\lVert_1\\
        &\leq \lVert Q_{Y'XY}^n - \pi_{Y'} Q_X^n\mathcal{W}\lVert_1 + \lVert \pi_{Y'} Q_X^n\mathcal{W} -\pi_{Y'}\mathcal{P}_X\mathcal{W}\lVert_1\\
        &\leq 2\varepsilon.
    \end{align*}
\end{proof}

\section{Proof for Theorem \ref{thm: conditional controlled Markov typicality lemma}}

Define the stochastic process for $t\geq 1$:
    \begin{align}
        S_t = (Y_{t-1},X_t,Y_t)
    \end{align}
    It is obviously a Markov chain.

    We show first, that $\{S_t\}$ is a time-homogeneous Markov chain: Given the current state $S_t = (Y_{t-1} = i,X_t = x,Y_t = j)$, then for the next time
    \begin{itemize}
        \item $X_{t+1}\sim\mathcal{P}_X$ independent of the past,
        \item $Y_{t+1}\sim\mathcal{W}(\cdot|X_{t+1}, Y_t = j)$
    \end{itemize}
    Therefore, the transition probability
    \begin{align}
        \mathbb P(S_{t+1} = (j,x',k)|S_t = (i,x,j)) = \mathcal{P}_X(x')\mathcal{W}(k|x',j)
    \end{align}
    does not depend on time. Hence $S$ is homogeneous. Moreover, since $\{Y_k\}$ has only one irreducible aperiodic recurrent set $\mathcal{R}_Y$, so does $\{S_k\}$, and it is simply $\mathcal{R}_S = \mathcal{R}_Y\times\mathcal{X}\times \mathcal{R}_Y$. Therefore, $S$ has a unique equilibrium distribution on $\mathcal{R}_S$. Next, we prove the equilibrium distribution of the process $S$ is simply $\mathcal{Q}_{Y'XY}=\pi_{Y'}\mathcal{P}_X\mathcal{W}_{Y|X,Y'}$, i.e., we want to have
    \begin{align}
        \sum_{i,x} \mathcal{Q}_{Y'XY}(i,x,j)\underbrace{\mathcal{P}_X(x')\mathcal{W}(k|x',j)}_{\text{transition matrix of $S$}} = \mathcal{Q}_{Y'XY}(j,x',k)
    \end{align}
    Now, because
    \begin{align*}
        &\sum_{i,x} \mathcal{Q}_{Y'XY}(i,x,j)\mathcal{P}_X(x')\mathcal{W}(k|x',j)\\
        &=\brackets{\sum_{i,x}\pi(i)\mathcal{P}_X(x)\mathcal{W}(j|x,i)}\mathcal{P}_X(x')\mathcal{W}(k|x',j)\\
        &=\brackets{\sum_{i}\pi(i)\sum_x\sbrackets{\mathcal{P}_X(x)\mathcal{W}(j|x,i)}}\mathcal{P}_X(x')\mathcal{W}(k|x',j)\\
        &=\pi(j)\mathcal{P}_X(x')\mathcal{W}(k|x',j)\\
        &=\mathcal{Q}_{Y'XY}(j,x',k).
    \end{align*}
    $\mathcal{Q}_{Y'XY}=\pi_{Y'}\mathcal{P}_X\mathcal{W}_{Y|X,Y'}$ is the invariant distribution of $S$. Since $\{S_t\}$ is a Markov chain with finite states and a unique irreducible aperiodic recurrent class, it is then ergodic. By the ergodic theorem, for each state $(i,x,j)\in\mathcal{R}_S$, we have
    \begin{align}
        \frac{1}{n}\sum_{t=1}^n \mathbf{1}\{S_t = (i,x,j)\} \xrightarrow[]{a.s.}\mathcal{Q}_{Y'XY}(i,x,j)
    \end{align}
    where the left hand side above is exactly the joint empirical $Q_{Y'XY}^n(i,x,j)$. Therefore
    \begin{align}
        \lVert Q_{Y'XY}^n - \mathcal{Q}_{Y'XY}\lVert_1 \xrightarrow[]{a.s.} 0.
    \end{align}
    Since almost sure convergence implies converge in probability, we have 
    \begin{align}
    \forall\varepsilon>0, \lim_{n\rightarrow\infty}\mathbb P((X^n,Y^n)\in\mathcal{T}_\varepsilon^{(n)}(\mathcal{Q}_{Y'XY}) = 1.
\end{align}
\hfill \qedsymbol{}

\section{Proof of Theorem \ref{lemma: AEP and cardinality bound}}
\noindent \textit{Proof of 1)}

    Let $\mathbb P((X_1,Y_1) = (x_1,y_1))  =  c$. Then,
    \begin{align*}
        &\mathbb P((X^n,Y^n) = (x^n,y^n))\\
        &=\mathbb P((X_1,Y_1) = (x_1,y_1))\prod_{i=2}^n\mathcal{P}_X(x_i)\mathcal{W}(y_i|x_i,y_{i-1})\\
        &=c\prod_{x\in\mathcal{X}}\mathcal{P}_X(x)^{N(x|x^n)}\prod_{x\in\mathcal{X},i,j\in\mathcal{Y}}\mathcal{W}(j|x,i)^{N(i,x,j|x^n,y^n)}
    \end{align*}
    where $N(x|x^n) = \sum_{t=1}^n\mathbf{1}\{x_t=x\},N(i,x,j|x^n,y^n) = \sum_{t=1}^n\mathbf{1}\{(y_{t-1},x_t,y_t) = (i,x,j)$.
    
    Now, take $\log_2$ on both sides and multiply by $-\frac{1}{n}$. Then we have
    \begin{align}
        &-\frac{1}{n}\log_2\mathbb P((x^n,y^n))\nonumber\\
        &= -\frac{1}{n}\log_2c - \sum_x Q_X^n(x)\log_2\mathcal{P}_X(x) - \sum_{x,i,j}Q^n_{Y'XY}(i,x,j)\log_2\mathcal{W}(j|x,i)\nonumber\\
        &= -\frac{1}{n}\log_2c - \sum_x \brackets{Q_X^n(x)-\mathcal{P}_X(x)}\log_2\mathcal{P}_X(x) - \sum_x\mathcal{P}_X(x)\log_2\mathcal{P}_X(x)\label{eq: eq to continue}\\
        &\quad - \sum_{x,i,j}\brackets{Q^n_{Y'XY}(i,x,j) - \pi_{Y'}(i)\mathcal{P}_X(x)\mathcal{W}(j|x,i)}\log_2\mathcal{W}(j|x,i) - \sum_{x,i,j} \pi_{Y'}(i)\mathcal{P}_X(x)\mathcal{W}(j|x,i)\log_2\mathcal{W}(j|x,i)\nonumber
    \end{align}
    Because $\mathcal{X},\mathcal{Y}$ are finite, then
    \begin{align}
        L_X:= \max_{x:\mathcal{P}_X(x)>0}|\log_2\mathcal{P}_X(x)|,\quad L_W:= \max_{(i,x,j):\mathcal{W}(j|x,i)>0}|\log_2\mathcal{W}(j|x,i)|
    \end{align}
    are both finite numbers.
    Now, since $(x^n,y^n)\in\mathcal{T}_\varepsilon^{(n)}(\mathcal{Q}_{Y'XY})$ input-driven Markov typical, we also have $\lVert Q_X^n - \mathcal{P}_X\lVert_1\leq \varepsilon$ provided by Proposition \ref{prop: joint controlled Markov typical}. Therefore, we obtain that
    \begin{align*}
       -\varepsilon\cdot L_X \leq\sum_x \brackets{Q_X^n(x)-\mathcal{P}_X(x)}\log_2\mathcal{P}_X(x) \leq \varepsilon\cdot L_X 
    \end{align*}
    and
    \begin{align*}
        -\varepsilon\cdot L_W\leq\sum_{x,i,j}\brackets{Q^n_{Y'XY}(i,x,j) - \pi_{Y'}(i)\mathcal{P}_X(x)\mathcal{W}(j|x,i)}\log_2\mathcal{W}(j|x,i)\leq \varepsilon\cdot L_W.
    \end{align*}

    Now, for any fixed $\delta>0$, choose $N_0$ big enough such that 
    \begin{align*}
        -\delta/3\leq -\frac{1}{n}\log_2c\leq \delta/3,\quad \forall n>N_0,
    \end{align*}
    and $\varepsilon_0$ small enough such that
    \begin{align*}
        \varepsilon_0(L_X+L_W)\leq \delta/3.
    \end{align*}
    Then, $\forall \varepsilon<\varepsilon_0$ and $n>N_0$, from \eqref{eq: eq to continue} we obtain that every $(x^n,y^n)\in\mathcal{T}_\varepsilon^{(n)}(\mathcal{Q}_{Y'XY})$ satisfies
    \begin{align*}
        &-\frac{1}{n}\log_2\mathbb P((x^n,y^n))< \delta + H(X)+H(Y|X,Y')=\delta + H(X,Y|Y')\\
        &-\frac{1}{n}\log_2\mathbb P((x^n,y^n))> -\delta+ H(X)+H(Y|X,Y')=-\delta + H(X,Y|Y')
    \end{align*}
    Hence, for every $(x^n,y^n)\in\mathcal{T}_\varepsilon^{(n)}(\mathcal{Q}_{Y'XY})$,
    \begin{align*}
        2^{-n\sbrackets{H(X,Y|Y')+\delta}}< \mathbb P((x^n,y^n)) < 2^{-n\sbrackets{H(X,Y|Y')-\delta}}
    \end{align*}

    Item 2) follows from the lower bound in item 1).\hfill \qedsymbol{}

\end{document}